\documentclass[letterpaper, 10 pt, conference]{ieeeconf}

\IEEEoverridecommandlockouts
\usepackage{cite}
\usepackage{amsmath,amssymb,amsfonts}
\usepackage{algorithmic}
\usepackage{graphicx}
\usepackage{textcomp}
\usepackage{xcolor}
\usepackage{braket}
\usepackage{mathtools}
\usepackage{pgfplots}
\usepackage{tikz}
\pgfplotsset{compat=1.18}

\newtheorem{theorem}{Theorem}
\newtheorem{corollary}{Corollary}

\title{\LARGE \bf Perturbation Analysis of Maximal Quantum Leakage}

\author{Zijia Zhao, Shuixin Xiao, and Farhad Farokhi
\thanks{The authors are with the Department of Electrical and Electronic Engineering at the University of Melbourne, Australia. This research was supported by the Commonwealth through an Australian Government Research Training Program Scholarship.}
\thanks{E-mails:\{zhaozj,shuixin.xiao,farhad.farokhi\}@unimelb.edu.au}
}

\begin{document}
\maketitle
\thispagestyle{empty}
\pagestyle{empty}

\begin{abstract}
Maximal quantum leakage (MQL) is a worst-case information leakage measure that quantifies an adversary's inference advantage gained from accessing quantum encoding of classical data with arbitrary measurements. While MQL admits an exact characterization for a given ensemble of quantum states, its robustness to implementation imperfections has not been systematically studied. In this paper, we analyze the sensitivity of maximal quantum leakage under perturbations of the quantum encoding. We establish a continuity bound in terms of the trace distance between ideal and perturbed quantum states, and show, via an example, that this bound is attainable. We further derive fidelity-based and relative-entropy-based sufficient conditions for bounding the variation of maximal quantum leakage, and illustrate numerically that these conditions can be loose.
\end{abstract}


\section{Introduction}

Information leakage through a computing or communication system has been the topic of both classical and quantum information theory. Even when a primary channel is designed to be secure, side-channel observations may allow an adversary to infer sensitive information about an underlying secret random variable. Classical measures, such as mutual information or channel capacity~\cite{calmon2012privacy, nekouei2019information, tanaka2017directed, wiese2018secure}, while fundamental, often fail to quantify worst-case adversarial behaviors.

To address this limitation, maximal leakage was introduced as an operational measure that quantifies the worst-case multiplicative increase in an adversary's optimal guessing probability after observing the system output\cite{saeidian2023, issa2020operational}. Unlike average information measures, such as mutual information, maximal leakage directly characterizes the strongest possible inference advantage over all decision strategies, making it well suited for side-channel analysis and privacy-sensitive system design~\cite{issa2020operational, wu2020, issa2016maximal, wu2020optimal}. 

As quantum technologies advance, classical data are increasingly encoded into quantum states, and adversaries may interact with these encodings through arbitrary quantum measurements~\cite{farokhi2023cdc, hirche2023qdp}. This motivates the notion of maximal quantum leakage (MQL), which extends maximal leakage to the quantum setting by allowing the adversary to perform arbitrary positive operator-valued measurements. The study~\cite{farokhi2024} provides an exact single-letter characterization of MQL for a given ensemble of quantum states, yielding a clear operational interpretation in terms of adversarial guessing advantage. More computable bounds have also been developed to shed light on the relationship between quantum maximal leakage and the distinguishability of encodings~\cite{farokhi2024barycentric}. Variants of quantum maximal leakage is shown to be a powerful technique in analyzing security of quantum key distribution mechanisms~\cite{farokhi2024measuring}.

The exact characterization of quantum maximal leakage alone does not address the robustness of leakage guaranties under perturbations of the underlying quantum states. In practice, quantum encodings are subject to noise, implementation errors, and unintended subsystem interactions. Knowing how maximal quantum leakage varies under such perturbations is therefore essential to assess leakage risks in realistic quantum systems~\cite{wilde2017, audenaert2007chernoff}.

A substantial body of work~\cite{winter2016continuity,audenaert2025continuity,audenaert2005relative} derives explicit bounds controlling how certain entropic and divergence-based quantities can vary under perturbations of quantum states, typically in settings where such quantities are well-behaved. These continuity-type bounds are developed for standard entropy and relative-entropy-based quantities, but they do not directly extend to maximal quantum leakage, which is defined via a worst-case optimization over measurements. Furthermore, a class of such continuity bounds based on relative entropy require extra assumptions on the eigenvalues of the density operator that can be restrictive in general.

In this work, we analyze the sensitivity of maximal quantum leakage to perturbations of quantum states. Using trace distance, we derive an explicit upper bound on the change in maximal quantum leakage and show that this bound is tight in the worst case by constructing encodings that attain it. We further obtain leakage bounds under fidelity-based and relative-entropy-based perturbations via inequalities relating these quantities to trace distance. Together, these results provide a quantitative robustness analysis of maximal quantum leakage under physically meaningful perturbations. We also present numerical examples demonstrating that the bounds relating to fidelity and relative-entropy perturbations can be loose. 

The rest of the paper is organized as follows. Section~\ref{sec:prem} present  preliminary results on quantum systems, distance measures, and quantum maximal leakage. Section~\ref{sec:bound} present  the derived continuity bounds and discuss their tightness in various settings. Finally, Section~\ref{sec:conc} concludes the paper.

\section{Preliminaries}
\label{sec:prem}

\subsection{Notation}
Let $\mathcal{H}$ be a finite-dimensional Hilbert space with dimension $d$, where $\mathcal{L}(\mathcal{H})$ denotes the set of all linear operators on $\mathcal{H}$. A pure state of a quantum system is represented by a unit vector $|\psi\rangle \in \mathcal{H}$ with $\langle \psi | \psi \rangle = 1$ where \(\langle\psi| := |\psi\rangle^\dagger \). A density operator is a positive semidefinite operator $\rho \in \mathcal{L}(\mathcal{H})$ with $\mathrm{Tr}(\rho)=1$. The set of all density operators on $\mathcal{H}$ is denoted by $\mathcal{D}(\mathcal{H})$. A density operator $\rho$ is said to be pure if $\rho = \ket{\psi}\bra{\psi}$ for some pure state $\ket{\psi} \in \mathcal{H}$; otherwise, $\rho$ is mixed. In general, any density operator admits a convex decomposition into pure states, i.e., $\rho = \sum_i p_i \ket{\psi_i}\bra{\psi_i}$, $p_i \ge 0,\;\; \sum_i p_i = 1$. An ensemble is a collection $\mathcal{E} = \{ p_X(x), \rho_x \}_{x \in \mathcal{X}}$, where $\{p_X(x)\}_{x\in\mathcal{X}}$ is a probability distribution over a finite alphabet $\mathcal{X}$ and $\rho_x \in \mathcal{D}(\mathcal{H})$ are density operators,
representing a random preparation of quantum states. A positive operator-valued measure (POVM) on $\mathcal{H}$ is a collection of operators $\{F_y\}_{y \in \mathcal{Y}} \subset \mathcal{L}(\mathcal{H})$ such that $F_y \ge 0$ for all $y \in \mathcal{Y}$ and $\sum_{y \in \mathcal{Y}} F_y = I$. By the Born's rule, when a POVM $\{F_y\}$ is applied to a quantum state $\rho$, the probability of obtaining outcome $y$ is given by $\Pr(y|\rho) = \operatorname{Tr}(\rho F_y)$. All logarithms are in base 2.

\subsection{Distance Measures Between Quantum States}

The trace norm between two density operators $\rho$ and $\sigma$ is defined as $\|\rho-\sigma\|_1 \, := \, \operatorname{Tr}(\sqrt{(\rho-\sigma)^\dagger(\rho-\sigma)})$. The fidelity between $\rho$ and $\sigma$ is defined as $F(\rho,\sigma) \;:=\; \big\|\sqrt{\rho}\sqrt{\sigma}\big\|_1^2$. The quantum relative entropy is defined as $D(\rho\|\sigma) := \operatorname{Tr}[\rho(\log \rho - \log \sigma)]$ when $\operatorname{supp}(\rho) \subseteq \operatorname{supp}(\sigma)$, and $D(\rho\|\sigma) = +\infty$ otherwise.

\subsection{Information Leakage and Maximal Quantum Leakage}

Let $X$ be a discrete random variable taking values in a finite alphabet $\mathcal{X}$, representing sensitive information to be protected. In an information leakage setting, an adversary aims to infer $X$, or a subset of $\mathcal{X}$ that contains it, by observing the output of a system that processes or encodes $X$. Information leakage measures quantify the extent to which access to such observations improves the adversary's ability to guess $X$ or a function of $X$.

In this paper, the random variable $X$ is encoded into quantum states. Specifically, each realization $x \in \mathcal{X}$ is mapped to a density operator $\rho_x$ on a Hilbert space $\mathcal{H}$, resulting in an ensemble $\mathcal{E}_\rho = \{p_X(x), \rho_x\}_{x \in \mathcal{X}}$. An adversary is allowed to perform an arbitrary quantum measurement on the state and use the measurement outcome to infer information about $X$.

Among various information leakage measures, maximal quantum leakage captures the worst-case multiplicative increase in an adversary’s guessing probability due to access to the measurement outcomes. 

Following \cite{farokhi2024}, the maximal quantum leakage associated with the ensemble $\mathcal{E}_\rho=\{p_X(x),\rho_x\}_{x\in\mathcal{X}}$ is given by
\[
\mathcal{Q}_\rho \;:=\; \sup_{\{F_y\}} \; \sup_{ Z,\, \hat{Z}} \log_2 \left( \frac{\mathbb{P}\!\left\{ Z = \hat{Z} \right\}} {\max_{z \in \mathcal{Z}} \mathbb{P}\!\left\{ Z = z \right\}} \right),
\]
where the inner supremum is taken over all random variables $Z$ and $\hat{Z}$ with arbitrary supports $\mathcal{Z}$ where $\hat{Z}$ is any estimate of $Z$ based on measurement outcome, while the outer supremum is taken over all POVMs $\{F_y\}$ on $\mathcal{H}$. Moreover, it follows from \cite[Theorem~1]{farokhi2024} that
\[
2^{\mathcal{Q}_\rho} = \sup_{\{F_y\}} \sum_{y} \max_{x \in \mathcal{X}} \operatorname{Tr}(\rho_x F_y).
\]

Operationally, maximal quantum leakage quantifies the maximum advantage that an adversary can obtain, in the worst case over all randomized functions of $X$, by observing the outcomes of a quantum measurement compared to having no access to the encoded quantum system.

\subsection{Perturbation Setting and Objective}

In many practical scenarios, the quantum encoding of a sensitive random variable may not be implemented exactly as intended. Instead of the ideal ensemble $\mathcal{E}_\rho = \{p_X(x), \rho_x\}_{x \in \mathcal{X}}$, the actual physical system may produce a perturbed ensemble $\mathcal{E}_{\rho'} = \{p_X(x), \rho'_x\}_{x \in \mathcal{X}}$, where each state $\rho'_x$ deviates slightly from the target state $\rho_x$. Such deviations may arise from imperfect isolation between subsystems, hardware noise, or unintended interactions within a quantum device.

The objective is to study the effect of such perturbations on maximal quantum leakage. Specifically, given two ensembles $\mathcal{E}_\rho$ and $\mathcal{E}_{\rho'}$ defined on the same alphabet
$\mathcal{X}$ and Hilbert space $\mathcal{H}$, we aim to quantify how the maximal quantum leakage changes when $\rho_x$ is replaced by $\rho'_x$ for each $x \in \mathcal{X}$. Our primary object is to bound the difference $\big| 2^{\mathcal{Q}_\rho} - 2^{\mathcal{Q}_{\rho'}} \big|$, which compares the worst-case adversarial advantage induced by the ideal and perturbed quantum encoding.

\section{Main Results}
\label{sec:bound}

We now present our main results on the stability of maximal quantum leakage under perturbations of the quantum encoding. We first establish a continuity bound in terms of the trace distance
between the ideal and perturbed states. Then we provide bounds based on fidelity and relative entropy.

\subsection{Trace-Distance Perturbation Bound}

The following theorem establishes a Lipschitz-type continuity bound for maximal quantum leakage with respect to the trace distance.

\begin{theorem} \label{tho:1}
    For a finite discrete random variable $X$ with distribution $p_X(\cdot)$, let $ \mathcal{E}_{\rho} = \{p_X(x), \rho_x\}_{x \in \mathcal{X}}$ and $ \mathcal{E}_{\rho'} = \{p_X(x), \rho'_x\}_{x \in \mathcal{X}}$ be ensembles such that for any  $x \in \mathcal{X}$ and $\epsilon > 0$, if $\| \rho_x - \rho'_x \|_1 \leq \epsilon$, then
    \[
    \left| 2^{\mathcal{Q}_{\rho}} - 2^{\mathcal{Q}_{\rho'}} \right| \leq  \frac{\epsilon}{2} \cdot \min\{|\mathcal{X}|, d\}.
    \]
\end{theorem}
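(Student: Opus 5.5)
We bound $\sup_{\{F_y\}}\sum_y \max_x \operatorname{Tr}(\rho_x F_y)$ for each fixed POVM, using the characterization $2^{\mathcal{Q}_\rho}=\sup_{\{F_y\}}\sum_y \max_x \operatorname{Tr}(\rho_x F_y)$ from \cite[Theorem~1]{farokhi2024}. The key step is a uniform estimate: for every POVM $\{F_y\}$,
\[
\Big|\sum_y \max_x \operatorname{Tr}(\rho_x F_y)-\sum_y \max_x \operatorname{Tr}(\rho'_x F_y)\Big|\le \frac{\epsilon}{2}\min\{|\mathcal{X}|,d\}.
\]
Since $|\sup f-\sup g|\le \sup|f-g|$, this gives the theorem at once.

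We first reduce to POVMs with few effective outcomes. For a fixed POVM, group the outcomes $y$ according to a maximizing index $x^*(y)\in\arg\max_x \operatorname{Tr}(\rho_x F_y)$ and merge them into $G_x=\sum_{y:x^*(y)=x}F_y$. Merging outcomes never increases $\sum_y\max_x$, and here the value is preserved. So the supremum for $\rho$ can be taken over POVMs with at most $|\mathcal{X}|$ outcomes, with value $\sum_x \operatorname{Tr}(\rho_x G_x)$. Symmetrically, for any POVM $\{G_x\}_{x\in\mathcal X}$ we have $2^{\mathcal{Q}_{\rho'}}\ge \sum_x \operatorname{Tr}(\rho'_x G_x)$. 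Therefore
\[
2^{\mathcal{Q}_\rho}-2^{\mathcal{Q}_{\rho'}}\le \sup_{\{G_x\}}\sum_x \operatorname{Tr}\big((\rho_x-\rho'_x)G_x\big),
\]
and the reverse difference is bounded the same way.

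We bound this in two ways.

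For the $|\mathcal X|$ bound, write $\Delta_x=\rho_x-\rho'_x$. This operator is traceless Hermitian, so $\operatorname{Tr}(\Delta_x G_x)\le \operatorname{Tr}(\Delta_x^+)=\tfrac12\|\Delta_x\|_1\le \epsilon/2$ whenever $0\le G_x\le I$. Summing over $x$ gives $|\mathcal X|\epsilon/2$.

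For the $d$ bound, the constraint $\sum_x G_x=I$ has to be used. Decompose $\Delta_x=\Delta_x^+-\Delta_x^-$ and drop the negative part:
\[
\sum_x\operatorname{Tr}(\Delta_xG_x)\le\sum_x \operatorname{Tr}(\Delta_x^+G_x)\le \sum_x \lambda_{\max}(\Delta_x^+)\operatorname{Tr}(G_x).
\]
Using $\lambda_{\max}(\Delta_x^+)\le \operatorname{Tr}(\Delta_x^+)\le\epsilon/2$ and $\sum_x\operatorname{Tr}G_x=\operatorname{Tr}I=d$, this is at most $d\epsilon/2$.

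Taking the minimum of the two bounds and applying the same argument with $\rho$ and $\rho'$ swapped yields the stated absolute-value bound.

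The main obstacle is the $d$-dependent bound. A naive per-outcome estimate only produces a factor of $|\mathcal X|$, or even of the number of outcomes. The point is that the operator-norm bound $\|\Delta_x^+\|_\infty\le\epsilon/2$, combined with the POVM normalization $\sum_x\operatorname{Tr}G_x=d$, controls the total. I also need to take care that restricting to merged POVMs is legitimate on both sides of the inequality. That is why the argument uses the merged POVM only as a feasible point for $\rho'$, which always gives a valid lower bound on $2^{\mathcal{Q}_{\rho'}}$.
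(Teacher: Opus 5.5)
Your proof is correct. It shares the paper's skeleton: group outcomes by an argmax index for $\rho$ into $E_x$ (your $G_x$), reduce to $\sum_x \operatorname{Tr}((\rho_x-\rho'_x)E_x)$, split $\rho_x-\rho'_x=\Delta_x^+-\Delta_x^-$, and drop the negative part.

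The routes differ in the final estimate, and there your version is the one that actually yields the minimum. The paper bounds $\Delta_{x+}\le\operatorname{Tr}(\Delta_{x+})P_x$, where $P_x$ is the projector onto the positive eigenspace. It arrives at $\frac{\epsilon}{2}\operatorname{Tr}(\sum_x P_x)$ and then appeals to $\sum_x P_x$ having rank at most $\min\{|\mathcal{X}|,d\}$. That step does not close, because $\operatorname{Tr}(\sum_x P_x)=\sum_x\operatorname{rank}(P_x)$. This sum can be as large as $|\mathcal{X}|(d-1)$, and it already exceeds $d$ whenever $|\mathcal{X}|>d$ and every $\rho_x\neq\rho'_x$.

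Your two separate bounds are exactly what is needed:
\begin{itemize}
\item $G_x\le I$ gives $\operatorname{Tr}(\Delta_x^+G_x)\le\operatorname{Tr}(\Delta_x^+)\le\epsilon/2$, which is the $|\mathcal{X}|$ term. The projector relaxation would lose a factor $\operatorname{rank}P_x$ here.
\item $\|\Delta_x^+\|_\infty\le\epsilon/2$ together with $\sum_x\operatorname{Tr}(G_x)=d$ gives the $d$ term. In the paper's notation this amounts to using $P_x\le I$ and $\sum_xE_x=I$, rather than passing to $\operatorname{Tr}(\sum_xP_x)$.
\end{itemize}

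Your handling of the suprema is also cleaner. You use the merged POVM only as a feasible point for $\rho'$, so you avoid the paper's unargued assumption that a maximizing POVM exists. That assumption is true here, by compactness after reducing to $|\mathcal{X}|$ outcomes, but the paper does not justify it.

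One minor point: your opening per-POVM ``uniform estimate'' and the merging argument are the same computation. The per-POVM bound follows from $\sum_x\operatorname{Tr}(\rho'_xG_x)\le\sum_y\max_x\operatorname{Tr}(\rho'_xF_y)$, so you only need one of the two presentations.
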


\begin{proof}
    Suppose a POVM \(\{F_{y_0}\}_{y_0 \in \mathcal{Y}_0}\) maximizes 
    \[\mathcal{Q}_{\rho}= \sup_{\{F_y\}}\log_2\left(\sum_{y \in \mathcal{Y}}\max_{x \in \mathcal{X}} \operatorname{Tr}(\rho_x F_y) \right).\] Then $2^{\mathcal{Q}_{\rho}} - 2^{\mathcal{Q}_{\rho'}}$
    \begin{align*}
        =& \sum_{y_0 \in \mathcal{Y}_0}\max_{x \in \mathcal{X}} \operatorname{Tr}(\rho_x F_{y_0}) - \sup_{\{F_{y'}\}}\sum_{y' \in \mathcal{Y}'}\max_{x \in \mathcal{X}} \operatorname{Tr}(\rho'_x F_{y'}) \\
        \leq& \sum_{y_0 \in \mathcal{Y}_0}\max_{x \in \mathcal{X}} \operatorname{Tr}(\rho_x F_{y_0})  - \sum_{y_0 \in \mathcal{Y}_0}\max_{x \in \mathcal{X}} \operatorname{Tr}(\rho'_x F_{y_0})
    \end{align*}

    Now we only need to prove for any set of POVM \(\{F_y\}\), $\mathcal{D} :=  \sum_{y\in \mathcal{Y}}\max_{x \in \mathcal{X} }\operatorname{Tr}(\rho_x F_y) - \sum_{y\in \mathcal{Y}}\max_{x \in \mathcal{X} }\operatorname{Tr}(\rho'_x F_y)$ is less than or equal to $\frac{\epsilon}{2} \cdot \min\{|\mathcal{X}|, d\}$.
    Suppose \(s(y) \in \operatorname*{arg\,max}_{x \in \mathcal{X}} \operatorname{Tr}(\rho_x F_y) \), then
    \begin{align*}
        \mathcal{D} &\leq \sum_{y}\operatorname{Tr}(\rho_{s(y)} F_y) - \sum_{y}\operatorname{Tr}(\rho_{s(y)}' F_y)\\
        &= \sum_{y}\operatorname{Tr}((\rho_{s(y)} - \rho'_{s(y)}) F_y)\\
        &= \sum_{x} \operatorname{Tr}((\rho_x - \rho_x')E_x),
    \end{align*}
    where $E_x := \sum_{y \in s^{-1}(x)}F_y$. Since \(\rho_x, \rho_x'\) are Hermitian matrices, we have \(\rho_x - \rho_x'\) is also Hermitian and it has a spectral decomposition \(\rho_x - \rho_x' = \sum_{x_i} \lambda_{x_i}  \ket{v_{x_i}}\!\bra{v_{x_i}}\). By the Jordan-Hahn decomposition decomposition of the Hermitian operator $\rho_x - \rho'_x$\cite{billingsley1995}, it admits a decomposition into positive and negative parts 
    $\rho_x - \rho'_x = \Delta_{x+} - \Delta_{x-}$, 
    where 
    \begin{align*}
        \Delta_{x+} :=& \sum_{\lambda_{x_i} > 0} \lambda_{x_i} \ket{v_{x_i}}\!\bra{v_{x_i}}\geq 0,\\
        \Delta_{x-} :=& -\sum_{\lambda_{x_i} < 0} \lambda_{x_i} \ket{v_{x_i}}\!\bra{v_{x_i}}\geq 0.
    \end{align*}
    Then
    \begin{align*}
        & \operatorname{Tr}(\Delta_{x+}) + \operatorname{Tr}(\Delta_{x-}) =  \sum_{x_i} |\lambda_{x_i}| = \| \rho_x - \rho'_x \|_1.
    \end{align*}
    Furthermore, 
    \begin{align*}
        \operatorname{Tr}(\Delta_{x+}) - \operatorname{Tr}(\Delta_{x-}) = \operatorname{Tr}(\rho_x - \rho'_x) = 0.
    \end{align*}
    Combining these identities results in
    \begin{align*}
        \operatorname{Tr}(\Delta_{x+}) = \operatorname{Tr}(\Delta_{x-}) = \frac{1}{2}\| \rho_x - \rho'_x \|_1 \leq \frac{\epsilon}{2}.
    \end{align*}
    Based on the above decomposition, we have
    \begin{align*}
        \mathcal{D} &\leq \sum_{x} \operatorname{Tr}((\rho_x - \rho_x')E_x)\\
        &= \sum_{x} \left(\operatorname{Tr}(\Delta_{x+}E_{x}) - \operatorname{Tr}(\Delta_{x-}E_{x})\right)\\
        &\leq \sum_{x} \operatorname{Tr}(\Delta_{x+}E_{x})\\
        &= \sum_{x} \operatorname{Tr} \left(\left( \sum_{\lambda_{x_i} > 0} \lambda_{x_i} \ket{v_{x_i}}\!\bra{v_{x_i}}\right) E_{x}\right).\\
    \end{align*}
Note that 
\begin{align*}
    \sum_{\lambda_{x_i} > 0} \lambda_{x_i} \ket{v_{x_i}}\!\bra{v_{x_i}} 
    &\leq \sum_{\lambda_{x_i} > 0} \left( \sum_{\lambda_{x_i} > 0} \lambda_{x_i} \right)  \ket{v_{x_i}} \bra{v_{x_i}} \\
    &= \operatorname{Tr}({\Delta_{x+}})P_x,
\end{align*}
where \(P_x := \sum_{i > 0} \ket{v_i}\!\bra{v_i}\) is the projector onto the positive eigen-space of \(\rho_x - \rho'_x\). Therefore, we have
    \begin{align*}
        \mathcal{D} &\leq \sum_{x} \operatorname{Tr} \left( \operatorname{Tr}({\Delta_{x+}}) P_{x} E_{x}\right)\\
        &= \sum_{x} \operatorname{Tr}({\Delta_{x+}}) \operatorname{Tr} \left(P_{x} E_{x}\right)\\
        &\leq \frac{\epsilon}{2} \sum_{x} \operatorname{Tr} \left(P_{x} E_{x}\right)\\
        &= \frac{\epsilon}{2} \operatorname{Tr} \left(\sum_{x} \left( P_{x} \sum_{y \in s^{-1}(x)}F_y \right) \right)\\
        &\leq \frac{\epsilon}{2} \operatorname{Tr} \left( \left( \sum_{y \in \mathcal{Y}}F_y \right) \left(\sum_{x} P_{x} \right) \right)\\
        &\leq \frac{\epsilon}{2} \operatorname{Tr} \left(\sum_{x} P_{x} \right).
    \end{align*} 
    Since the operator \(\sum_{x} P_{x}\) has rank at most \(\min\{|\mathcal{X}|, d\}\), we have \(\mathcal{D}\leq \frac{\epsilon}{2} \cdot \min\{|\mathcal{X}|, d\}.\) Thus 
    \[
    2^{\mathcal{Q}_{\rho}} - 2^{\mathcal{Q}_{\rho'}} \leq \mathcal{D}\leq \frac{\epsilon}{2} \cdot \min\{|\mathcal{X}|, d\}.
    \]    
    By symmetry, we also have \(2^{\mathcal{Q}_{\rho'}} - 2^{\mathcal{Q}_{\rho}} \leq  \frac{\epsilon}{2} \cdot \min\{|\mathcal{X}|, d\},\) which finishes the proof.
\end{proof}

In what follows, by constructing an example, we demonstrate that the bound in Theorem~\ref{tho:1} is tight in the worst case. Suppose \(\mathcal{X} = \{1, \cdots, |\mathcal{X}|\}\), for any integer $x \in \mathcal{X}$, let $\rho_x = \ket{x}\!\bra{x}$ with $|\mathcal{X}| \leq d$. We begin by evaluating $2^{\mathcal{Q}_{\rho}}$ for the ensemble $\mathcal{E}_\rho = \{p_X(x), \rho_x\}_{x\in\mathcal{X}}$ with arbitrary underlying distribution. The proposition 2 of \cite{farokhi2024} shows that $2^{\mathcal{Q}_{\rho}} \leq |\mathcal{X}|$. Now consider the POVM $\{F_y\}_{y \in \mathcal{Y}}$ such that
\[
F_y := |y\rangle\!\langle y| \ \text{ for } y\in\{1,\dots,d\}, \quad \sum_{y=1}^d F_y = I_d.
\]
With this choice of POVM, the maximal quantum leakage of $\mathcal{E}_\rho$ evaluates to
\begin{align*}
    2^{\mathcal{Q}_{\rho}} 
    &= \sum_{y=1}^d \max_x \operatorname{Tr}(\rho_x F_y) \\
    &= \sum_{y=1}^{|\mathcal{X}|} 1 + \sum_{y=|\mathcal{X}|+1}^d 0 \\
    &=  |\mathcal{X}|.
\end{align*}
For \(\epsilon \in (0,1)\), let $\rho_x' = \left(1-\frac{\epsilon}{2}\right)\ket{x}\!\bra{x} + \frac{\epsilon}{2}\ket{x+1}\!\bra{x+1}$, where the addition is taken modulo $|\mathcal{X}|$, and ensemble $\mathcal{E}_{\rho'} = \{p_X(x), \rho'_x\}_{x\in\mathcal{X}}$. Under this construction, the trace norm satisfies \(\| \rho_x - \rho_x' \|_1 = \epsilon\), and it holds that
\begin{align*}
\operatorname{Tr}(\rho_x' E_x) 
&\le \max\!\left\{\left(1-\frac{\epsilon}{2}\right)(E_x)_{xx}, \frac{\epsilon}{2}(E_x)_{x+1,x+1}\right\} \\
&\le 1-\frac{\epsilon}{2},
\end{align*}
where the indices are taken modulo $|\mathcal{X}|$. Therefore,
\begin{align*}
    \sum_y \max_x \operatorname{Tr}(\rho_x' F_y) 
    &= \sum_x \operatorname{Tr}(\rho_x' E_x) \\
    &\leq |\mathcal{X}| \cdot \left(1-\frac{\epsilon}{2}\right),
\end{align*}
which implies $2^{\mathcal{Q}_{\rho'}} \le |\mathcal{X}|\!\left(1-\frac{\epsilon}{2}\right)$. With the same POVM $\{F_y\}_{y \in \mathcal{Y}}$, $\sum_{y=1}^d \max_x \operatorname{Tr}(\rho_x' F_y) = \sum_{y=1}^{|\mathcal{X}|} \left(1-\frac{\epsilon}{2}\right) + \sum_{y=|\mathcal{X}|+1}^d 0= |\mathcal{X}| \cdot \left(1-\frac{\epsilon}{2}\right)$. Thus \(2^{\mathcal{Q}_{\rho}} = |\mathcal{X}|\), \(2^{\mathcal{Q}_{\rho'}} = \left(1-\frac{\epsilon}{2}\right) |\mathcal{X}|\) and \(2^{\mathcal{Q}_{\rho}} - 2^{\mathcal{Q}_{\rho'}} = \frac{\epsilon}{2}\cdot|\mathcal{X}|\). This example saturates the upper bound for Theorem 1. Noting that maximal leakage is invariant under unitary operations~\cite{farokhi2024}, this example can be used to constructed infinitely many situations in which the upper bound is attained.

\subsection{Fidelity-Based Leakage}
We next derive perturbation bounds for quantum maximal  leakage based on quantum fidelity.
\begin{corollary}
For a finite discrete random variable $X$ with distribution function $p_X(\cdot)$, let $\mathcal{E}_{\rho} = \{p_X(x), \rho_x\}_{x \in \mathcal{X}}$ and $\mathcal{E}_{\rho'} = \{p_X(x), \rho'_x\}_{x \in \mathcal{X}}$ be ensembles such that \(1-\epsilon < F(\rho_x,\rho'_x) \leq 1\) for any $x \in \mathcal{X}$ some $\epsilon > 0$. Then
\[
\left| 2^{\mathcal{Q}_{\rho}} - 2^{\mathcal{Q}_{\rho'}} \right| \leq \sqrt{\epsilon} \cdot \min\{|\mathcal{X}|, d\}.
\]
\end{corollary}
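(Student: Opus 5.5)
The plan is to reduce the corollary to Theorem~\ref{tho:1} using the Fuchs--van de Graaf inequality, which relates fidelity to trace distance. With the fidelity normalized as in this paper, $F(\rho,\sigma)=\|\sqrt{\rho}\sqrt{\sigma}\|_1^2$, the inequality reads
\[
\tfrac{1}{2}\|\rho-\sigma\|_1 \le \sqrt{1-F(\rho,\sigma)}.
\]
This is a standard result; we take it from the literature (e.g.\ \cite{wilde2017}) rather than re-derive it.

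First, fix $x\in\mathcal{X}$. By hypothesis $F(\rho_x,\rho'_x) > 1-\epsilon$, so $1-F(\rho_x,\rho'_x) < \epsilon$. Plugging this into the inequality above gives
\[
\|\rho_x-\rho'_x\|_1 \le 2\sqrt{1-F(\rho_x,\rho'_x)} < 2\sqrt{\epsilon}.
\]
Hence every $x$ satisfies the trace-distance hypothesis of Theorem~\ref{tho:1} with $\epsilon$ replaced by $\epsilon' := 2\sqrt{\epsilon}$.

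Second, Theorem~\ref{tho:1} with parameter $\epsilon'$ yields
\[
\left|2^{\mathcal{Q}_\rho}-2^{\mathcal{Q}_{\rho'}}\right| \le \tfrac{\epsilon'}{2}\min\{|\mathcal{X}|,d\} = \sqrt{\epsilon}\,\min\{|\mathcal{X}|,d\},
\]
which is the claimed bound.

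The only real point of care is the normalization convention of the fidelity. If $F$ denoted the root fidelity $\|\sqrt{\rho}\sqrt{\sigma}\|_1$, the bound would change: one would get $1-F^2 \le 2(1-F) < 2\epsilon$, and a factor $\sqrt{2}$ would appear. The paper explicitly squares the trace norm, so the inequality in the form above applies directly, and no further obstacle is expected.
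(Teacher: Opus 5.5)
Your proposal is correct and follows the paper's proof: the Fuchs--van de Graaf upper bound $\tfrac{1}{2}\|\rho_x-\rho'_x\|_1\le\sqrt{1-F(\rho_x,\rho'_x)}<\sqrt{\epsilon}$ gives the uniform trace-distance bound $2\sqrt{\epsilon}$, and Theorem~\ref{tho:1} then yields $\sqrt{\epsilon}\,\min\{|\mathcal{X}|,d\}$. Your check of the squared-versus-root fidelity convention is accurate; it is not needed here, since the paper uses the squared form.
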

\begin{proof}
By the Fuchs--van de Graaf inequalities \cite[Theorem~9.3.1]{wilde2017}
\[
1-\sqrt{F(\rho, \sigma)} \leq \frac{1}{2}\|\rho-\sigma\|_1 \leq \sqrt{1-F(\rho, \sigma)}.
\]
and the assumption $F(\rho_x,\rho'_x) > 1-\epsilon$ for any $x \in \mathcal{X}$, it follows that 
\[
\frac{1}{2}\|\rho_x-\rho'_x\|_1 \le \sqrt{1-F(\rho_x,\rho'_x)} < \sqrt{\epsilon}.
\]
Hence $\|\rho_x-\rho'_x\|_1 \le 2\sqrt{\epsilon}$.
Applying Theorem 1 yields
\begin{align*}
    \left| 2^{\mathcal{Q}_{\rho}} - 2^{\mathcal{Q}_{\rho'}} \right| 
    &\le \frac{\|\rho_x-\rho'_x\|_1}{2} \cdot \min\{|\mathcal{X}|, d\} \\
    &\le \sqrt{\epsilon} \cdot \min\{|\mathcal{X}|, d\}.
\end{align*}
\end{proof}

\subsection{Relative-Entropy-Based Leakage}
A sufficient condition for leakage control can also be derived in terms of quantum relative entropy.

\begin{corollary}
For a finite discrete random variable $X$ with distribution function
$p_X(\cdot)$, let $\mathcal{E}_{\rho} = \{p_X(x), \rho_x\}_{x \in \mathcal{X}}$ and $\mathcal{E}_{\rho'} = \{p_X(x), \rho'_x\}_{x \in \mathcal{X}}$ be ensembles such that $D(\rho_x \| \rho'_x) \le \epsilon$ for all $x \in \mathcal{X}$ and some $\epsilon > 0$. Then
\[
\left| 2^{\mathcal{Q}_{\rho}} - 2^{\mathcal{Q}_{\rho'}} \right|
\le  \frac{\sqrt{2 \ln 2 \cdot \epsilon}}{2} \cdot \min\{|\mathcal{X}|, d\}.
\]
\end{corollary}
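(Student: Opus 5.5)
The proof follows the same pattern as the fidelity corollary: convert the relative-entropy hypothesis into a trace-distance bound, then apply Theorem~\ref{tho:1}. The tool is the quantum Pinsker inequality. With the natural logarithm it reads $D_{\ln}(\rho\|\sigma) \ge \frac{1}{2}\|\rho-\sigma\|_1^2$. Since the paper measures all logarithms in base 2, we have $D(\rho\|\sigma) = D_{\ln}(\rho\|\sigma)/\ln 2$, so the inequality becomes
\[
\frac{1}{2}\|\rho-\sigma\|_1^2 \le \ln 2 \cdot D(\rho\|\sigma).
\]
I would cite this in the same way the Fuchs--van de Graaf inequality was cited, e.g.\ from \cite{wilde2017}.

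The steps, in order, are as follows. First, fix $x\in\mathcal{X}$. Pinsker together with the hypothesis $D(\rho_x\|\rho'_x)\le\epsilon$ gives $\|\rho_x-\rho'_x\|_1^2 \le 2\ln 2\cdot\epsilon$, hence
\[
\|\rho_x-\rho'_x\|_1 \le \sqrt{2\ln 2\cdot \epsilon} =: \epsilon'.
\]
If $D(\rho_x\|\rho'_x)=+\infty$ the hypothesis fails, so the support condition holds automatically and the inequality is meaningful. Second, this bound holds uniformly in $x$, so Theorem~\ref{tho:1} applies with $\epsilon'$ in place of $\epsilon$ and yields
\[
\left|2^{\mathcal{Q}_\rho}-2^{\mathcal{Q}_{\rho'}}\right| \le \frac{\epsilon'}{2}\min\{|\mathcal{X}|,d\},
\]
which is exactly the claimed bound.

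The only delicate point is bookkeeping the logarithm base so that the constant $\sqrt{2\ln 2\cdot\epsilon}/2$ comes out exactly. One must check that Pinsker's natural-log constant $1/2$ converts to the factor $\ln 2$ in base 2, and not to its reciprocal. Since $D_{\ln} = \ln 2\cdot D_{\log_2}$, the substitution is $\frac12\|\cdot\|_1^2 \le D_{\ln} = \ln 2\, D$, and this gives $\sqrt{2\ln 2\,\epsilon}$ as required. A secondary point is that Theorem~\ref{tho:1} needs a uniform trace-distance bound over all $x$. The Pinsker bound provides this directly because $\epsilon$ does not depend on $x$.
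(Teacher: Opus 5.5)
Your proposal is correct and matches the paper's proof. The paper likewise uses the base-2 quantum Pinsker inequality to get $\|\rho_x-\rho'_x\|_1 \le \sqrt{2\ln 2\cdot D(\rho_x\|\rho'_x)} \le \sqrt{2\ln 2\cdot\epsilon}$ uniformly in $x$, applies Theorem~\ref{tho:1}, and handles the logarithm-base constant the same way you did.
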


\begin{proof}
According to the quantum Pinsker inequality in \cite[Theorem~11.9.1]{wilde2017},
\[
\|\rho_x-\rho'_x\|_1 \le \sqrt{2 \ln 2 \cdot D(\rho_x\|\rho_x')} \le \sqrt{2 \ln 2 \cdot \epsilon}.
\]
Applying Theorem 1 yields
\[
\left| 2^{\mathcal{Q}_{\rho}} - 2^{\mathcal{Q}_{\rho'}} \right| \le \frac{\sqrt{2 \ln 2 \cdot \epsilon}}{2} \cdot \min\{|\mathcal{X}|, d\}.
\]
\end{proof}

\subsection{Numerical Illustration}

We provide numerical results to investigate the potential looseness of stability bounds based on both fidelity and quantum relative entropy. In our simulations, we fix the Hilbert space dimension to $d=3$ and generate a large pool of random density operators sampled from the induced Ginibre ensemble \cite{ginibre1965}. For each trial, a base ensemble $\mathcal{E}=\{\rho_1,\rho_2,\rho_3\}$ is constructed by independently sampling three states from the same distribution.

Given a perturbation radius $\varepsilon>0$, we associate with each $\rho_i$ a candidate pool consisting of all states $\sigma$ satisfying either $1-F(\rho_i,\sigma)\le \varepsilon$ in the fidelity-based setting, or $D(\rho_i\|\sigma) \le \varepsilon$ in the relative-entropy-based setting. Perturbed ensembles $\mathcal{E}'=\{\rho'_1,\rho'_2,\rho'_3\}$ are then formed by independently sampling each $\rho'_i$ from the pool corresponding to $\rho_i$. For both $\mathcal{E}$ and $\mathcal{E}'$, the maximal quantum leakage is computed numerically by optimizing over POVMs using the iterative projected-gradient algorithm in \cite{farokhi2024}. Repeating this experiment over many independent base ensembles and sweeping over $\varepsilon$, we record the worst-case deviation in $2^{\mathrm{MQL}} = \max_{\mathcal{E}_{\rho'}}\bigl|2^{\mathcal{Q}_{\rho'}} - 2^{\mathcal{Q}_\rho}\bigr|$ induced by admissible perturbations, thereby empirically characterizing the looseness of fidelity-based and relative-entropy-based stability bounds.

\begin{figure}[t]
  \centering
  \includegraphics[width=0.95\linewidth]{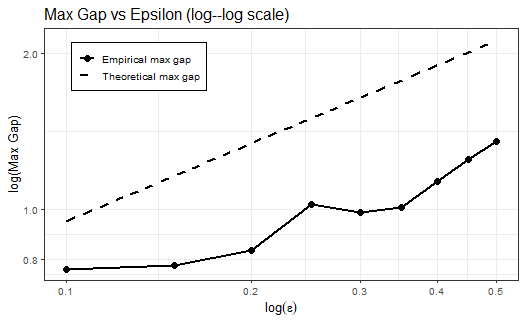}
  \caption{Maximum deviation of $2^{\mathrm{MQL}}$ under fidelity-based perturbations as a function of the radius $\varepsilon$.}
  \label{fig:fidelity}
\end{figure}

\begin{figure}[t]
  \centering
  \includegraphics[width=0.95\linewidth]{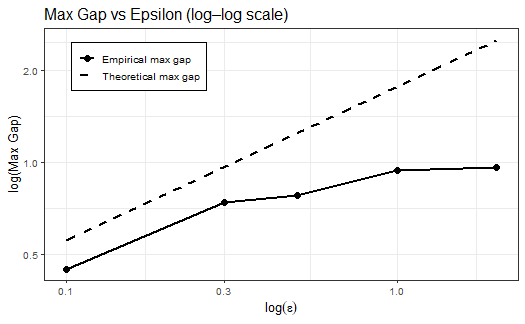}
  \caption{Maximum deviation of $2^{\mathrm{MQL}}$ under relative-entropy-based perturbations as a function of the radius $\varepsilon$.}
  \label{fig:relentropy}
\end{figure}

The resulting log--log plots of the maximum deviation in $2^{\mathrm{MQL}}$ as a function of the perturbation radius $\varepsilon$ are shown in Fig.~\ref{fig:fidelity} and Fig.~\ref{fig:relentropy} for the fidelity-based and relative-entropy-based settings, respectively. In each figure, we additionally plot the corresponding theoretical upper bound on the variation of $2^{\mathrm{MQL}}$, as predicted by the analytical stability results.

In both cases, the empirical maximum gap exhibits an approximately linear trend on the log--log scale over a nontrivial range of $\varepsilon$, suggesting a power-law-type dependence of $2^{\mathrm{MQL}}$ on the perturbation radius. However, a substantial gap is observed between the numerically computed deviations and the theoretical upper bounds, indicating a potential looseness of the bounds in practice. Taken together, these observations demonstrate that while both fidelity- and relative-entropy-based bounds are valid in theory, their degrees of looseness differ markedly.

\section{Conclusion}
\label{sec:conc}

In this paper, we studied the stability of maximal quantum leakage under perturbations of the quantum encoding. We established a trace-distance continuity bound that quantifies the variation of maximal quantum leakage between an ideal and a perturbed ensemble, and showed that this bound is tight in the worst case. We further derived fidelity- and relative-entropy-based sufficient conditions and illustrated, via numerical results, that these conditions can be generally loose. Our results provide a quantitative understanding of the robustness of maximal quantum leakage to implementation imperfections. An interesting direction for future work is to identify tighter bounds under additional structural assumptions on the quantum encoding or to extend the analysis to other operational leakage measures.

\bibliographystyle{ieeetr}
\bibliography{ref}

\end{document}